\newcommand{\F}{\mathbb{F}}
\newcommand{\ord}{{\rm{ord}}}
\theoremstyle{plain}
\newtheorem{theorem}{Theorem}[section]
\newtheorem{corollary}[theorem]{Corollary}
\newtheorem{proposition}[theorem]{Proposition}
\numberwithin{equation}{section}
\theoremstyle{definition}
\newtheorem{definition}[theorem]{Definition}
\newtheorem{example}[theorem]{Example}
\newtheorem{remark}[theorem]{Remark}
\newcommand{\BKLC}{{\rm{BKLC}}}
\begin{document}

\title{Equivalence of Constacyclic Codes with shift constants of different orders \ \ }

\author[dastbasteh]{Reza Dastbasteh}
\address{Tecnun, University of Navarra, San Sebastian, Spain}
\email{\url{rdastbasteh@unav.es}}

\author[padashnick]{Farzad Padashnick}
\address{Department of Pure Mathematics, University of Isfahan, Isfahan, Iran}
\email{\url{f.padashnik@sci.ui.ac.ir}}

\author[Crespo]{Pedro M. Crespo}
\address{Tecnun, University of Navarra, San Sebastian, Spain}
\email{\url{pcrespo@tecnun.es}}

\author[Grassl]{Markus Grassl}
\address{International Centre for Theory of Quantum Technologies, University of
Gdansk, Poland}
\email{\url{markus.grassl@ug.edu.pl}}

\author[Sharafi]{Javad Sharafi}
\address{Kashan University, Kashan, Iran}
\email{\url{javadsharafi@grad.kashanu.ac.ir}}

\def\keywordname{{\bf Keywords:}}

\maketitle

\begin{abstract}
Let $a$ and $b$ be two non-zero elements of a finite field
$\F_q$, where $q>2$. It has been shown that if $a$ and $b$
have the same multiplicative order in $\F_q$, then the families of
$a$-constacyclic and $b$-constacyclic codes over $\F_q$ are monomially
equivalent. In this paper, we investigate the monomial equivalence of
$a$-constacyclic and $b$-constacyclic codes when $a$ and $b$ have
distinct multiplicative orders. We present novel conditions for
establishing monomial equivalence in such constacyclic codes,
surpassing previous methods of determining monomially equivalent
constacyclic and cyclic codes.  As an application, we use these
results to search for new linear codes more systematically. In
particular, we present more than $70$ new record-breaking linear codes
over various finite fields, as well as new binary quantum codes.
\end{abstract}

{{\textit{Keywords:}} 
constacyclic code, cyclic code, monomial equivalence, isometric equivalence, new linear code.}

\section{Introduction}
Constacyclic codes are an important class of linear codes because of
their algebraic structure and efficient encoding and decoding
algorithms. Due to their similarities with cyclic codes, constacyclic
codes inherit many of the desirable properties of cyclic codes making
them well-suited for various practical applications. For instance, in
\cite{Application1}, the authors proposed a decoding algorithm for
constacyclic codes using a slight modification of that of cyclic
codes. Furthermore, as discussed in \cite[Section 1.2]{ding2022},
certain constacyclic codes can outperform cyclic codes by offering
larger minimum distances and producing Hamming codes that are not
equivalent to cyclic codes.

Let $\F_q$ be a finite field, where $q=p^m$ and $p$ is a prime
number. We denote the multiplicative group of $\F_q$ by $\F_q
^{*}$. Let $ a \in \F_q ^{*}$ and $n$ be a positive integer.  A linear
code $C$ of length $n$ over $\F_q$ is an $\F_q$ vector subspace of
$\F_q ^n$. The parameters of a linear code of length $n$, dimension
$k$, and minimum (Hamming) distance $d$ over $\F_q$ is denoted by
$[n,k,d]_q$.
We denote the multiplicative order of $a$ in $\F_q^\ast$ by $\ord(a)$.
A linear code $C$ of length $n$ over $\F_q$ is called
\textit{$a$-constacyclic} if for each codeword $(c_0,c_1,\ldots
,c_{n-1}) \in C$, we have $(a c_{n-1} ,c_0 ,c_1 ,\ldots ,c_{n-2}) \in
C$. The value $a$ is called the {\em shift constant} of such a
constacyclic code. A constacyclic code with shift constant $a=1$
is called {\em cyclic}.

The concept of equivalence of codes has been known for several
decades. It can be traced back to the early developments in coding
theory.
It involves considering two linear codes as equivalent if one can be
transformed into the other through certain operations, such as
permutation of coordinates, multiplication of coordinates by a nonzero
scalar, and applying a field automorphism to all codewords. This
concept allows us to study different codes that have the same
underlying structure and properties.

There have been several works on the classification and determination
of equivalent cyclic and constacyclic codes using algebraic properties
of these codes; for example, see \cite{Aydin1, Aydin2, Chen,
  RezaEquivalence,HP}.  In \cite{Aydin1}, the authors proved that for
each $a,b \in \F_q^\ast$ such that $\ord(a)=\ord(b)$, there exists an
isometry between the class of $a$-constacyclic and $b$-constacyclic
codes.  Therefore, the next natural step in studying equivalent
constacyclic codes is to consider codes with shift constants of
different orders.

In \cite{Bierbrauer}, J\"{u}rgen Bierbrauer established that every
$a$-constacyclic code of length $n$ over $\F_q$ is monomially
equivalent to a cyclic code of length $n$ over $\F_q$, provided that
$\gcd(\ord(a),n)=1$. This result gives a relationship between
constacyclic codes and cyclic codes, and also imposes another
restriction on constacyclic codes, allowing us to study codes that
exhibit properties that may be different from cyclic codes.

This paper extends previous works on equivalence of codes by providing
new conditions for the monomial equivalence of two constacyclic codes,
or a pair of constacyclic and cyclic codes. Specifically, we introduce
a new condition for establishing the monomial equivalence of two
constacyclic codes with shift constants of different orders.
Furthermore, we prove that our criterion generalizes the mentioned
result of Bierbrauer in \cite{Bierbrauer}, as it can detect monomially
equivalent $a$-constacyclic and cyclic codes of length $n$, even when
$\gcd(\ord(a),n)>1$. Our new results are applied to small finite
fields, where we classify families of monomially equivalent
constacyclic codes. We also use these results to search more
systematically for new linear codes. In particular, we construct many
new record-breaking linear codes over various finite fields, as well
as binary quantum codes.

\section{Preliminaries}
Throughout this paper we assume that $n$ is a positive integer such
that $\gcd(n,q)=1$.  It is well-known that there is a one-to-one
correspondence between $a$-constacyclic codes of length $n$ over
$\F_q$ and ideals of the ring $\F_q [x]/\langle x^n- a \rangle$. Thus
each $a$-constacyclic code $C$ is generated by a unique monic
polynomial $f(x)$ such that $f(x) \mid (x^n- a)$. The polynomial
$f(x)$ is called the \textit{generator polynomial} of $C$.  Therefore
to determine all $a$-constacyclic codes of length $n$ over $\F_q$ it
is essential to find the irreducible factorization of $x^n-a$.

Let $\ord(a)=t$ and $\alpha$ be a primitive $(tn)$-th root of unity in
a finite field extension of $\F_q$ such that $\alpha^n=a$. Then all
different roots of $x^n-a$ are in the form $\alpha^s$, where $s \in
\Omega _{a}$ and
\begin{equation}
\Omega _{a}=\{{kt+1}\colon 0\le k\le n-1\}.
\end{equation}

\begin{remark}
In the numerical examples of constacyclic codes throughout this paper,
the $(tn)$-th root of unity $\alpha$ is fixed as follows. First we
find the smallest integer $z$ such that $tn \mid q^z-1$.  Let $\gamma$
be the primitive element in $\F_{q^z}$ chosen by the computer
algebra system Magma \cite{magma}.  Set $\alpha_0=\gamma^{(q^z-1)/
  (tn)}$. Then there exists an integer $1\le i \le tn-1$ such that
$\alpha_0^{in}=a$. Now select $\alpha=\alpha_0^i$.
\end{remark}

For each $s \in \Omega _{a}$, the {\em $q$-cyclotomic coset} of $s$ modulo $tn$ is defined by 
\[
Z(s)=\{(q^is) \bmod{tn}\colon 0\le i \le r-1\},
\]
where $r$ is the smallest number such that $q^rs \equiv s
\pmod{tn}$. It is straight forward to see that there exist elements
$b_1,b_2,\ldots,b_m \in \Omega_a$ such that $\{Z(b_i)\colon 1\le i \le
m\}$ forms a partition of $\Omega_a$. Thus one can find the following
irreducible factorization of $x^n-a$ over $\F_q$:
\[
x^n-a=\prod_{i=1}^{m}f_i(x),
\]
where
\[
f_i(x)=\prod_{\ell \in Z(b_i)}(x-\alpha^\ell).
\]
A common approach to represent an $a$-constacyclic code with generator
polynomial $g(x)$ (the monic polynomial dividing $x^n-a$) is by its
{\em defining set} $A \subseteq \Omega_a$ such that $s \in A$ if and
only if $g(\alpha^s)=0$. It is easy to see that the defining set of an
$a$-constacyclic code is a unique (as $\alpha$ is already fixed) union
of $q$-cyclotomic cosets modulo $nt$.

An $n\times n$ matrix $M$ is called a \textit{monomial matrix} over
$\F_q$ if $M$ has exactly one nonzero entry from $\F_q$ in each row
and each column.  Let $C_1$ and $C_2$ be two linear codes of length
$n$ over $\F_q$ and $G$ be a generator matrix for $C_1$. Then $C_1$
and $C_2$ are called \textit{monomially equivalent} if there exists a
monomial matrix $M$ over $\F_q$ such that $GM$ is a generator matrix
for $C_2$.

Another commonly used notion of equivalence of
codes is based on isometries between linear codes.  We denote the
(Hamming) distance between two vectors $u$ and $v$ by $d(u,v)$.

\begin{definition}
Let $C_1$ and $C_2$ be two linear codes over $\F_q$. Then $C_1$ and
$C_2$ are called {\em isometric equivalent} if there exists an
$\F_q$-isomorphism $\phi\colon C_1 \rightarrow C_2$ such that
$d(u,v)=d(\phi(u), \phi(v))$ for each $u,v \in C_1$.
\end{definition}
The map $\phi$ with the above properties is called an {\em isometry of
  linear codes}. It was shown that the following connection holds
between monomial and isometric equivalence of codes.

\begin{theorem}\cite[Corollary 1]{Bograt}
Let $C_1$ and $C_2$ be two linear codes over a finite field. Then
$C_1$ and $C_2$ are monomially equivalent if and only if they are
isometric equivalent.
\end{theorem}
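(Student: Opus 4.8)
The plan is to treat the two implications separately, the forward one being routine and the converse carrying all the content. For \emph{monomial $\Rightarrow$ isometric}, suppose $M$ is a monomial matrix over $\F_q$ with $C_1M=C_2$. Writing $M=PD$ with $P$ a permutation matrix and $D$ an invertible diagonal matrix, the map $\phi\colon C_1\to C_2$, $c\mapsto cM$, is an $\F_q$-isomorphism; since permuting the coordinates of a vector and rescaling its coordinates by nonzero scalars leave the support unchanged, $\mathrm{wt}(cM)=\mathrm{wt}(c)$ for every $c$, and therefore $d(\phi(u),\phi(v))=\mathrm{wt}\bigl((u-v)M\bigr)=\mathrm{wt}(u-v)=d(u,v)$, so $\phi$ is an isometry.

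For \emph{isometric $\Rightarrow$ monomial}, let $\phi\colon C_1\to C_2$ be an isometry and put $k=\dim C_1=\dim C_2$. The first step I would take is to pass to generator matrices aligned by $\phi$: choose any generator matrix $G_1$ of $C_1$, with rows $r_1,\dots,r_k$, and let $G_2$ be the matrix with rows $\phi(r_1),\dots,\phi(r_k)$. Linearity of $\phi$ makes $G_2$ a generator matrix of $C_2$ and gives $mG_2=\phi(mG_1)$ for all $m\in\F_q^k$; since $\phi(0)=0$ and $\phi$ preserves distance, this yields $\mathrm{wt}(mG_1)=\mathrm{wt}(mG_2)$ for every $m$. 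It then suffices to produce a monomial matrix $M$ with $G_1M=G_2$.

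The heart of the argument is a double count on the columns $g_1,\dots,g_n$ of $G_1$ and $h_1,\dots,h_n$ of $G_2$. Summing $\mathrm{wt}(mG_1)$ over all $m\in\F_q^k$ shows that $G_1$ and $G_2$ have equally many zero columns, so after a permutation I may assume every column is nonzero. For a fixed nonzero $v\in\F_q^k$, I would sum the identity $\mathrm{wt}(mG_1)=\bigl|\{j\colon\langle g_j,m\rangle\neq0\}\bigr|$ over the hyperplane $v^{\perp}$; using that $|v^{\perp}\cap g_j^{\perp}|$ equals $q^{k-1}$ if $g_j$ is a scalar multiple of $v$ and $q^{k-2}$ otherwise, this gives
\[
\sum_{m\in v^{\perp}}\mathrm{wt}(mG_1)=\bigl(n-a(v)\bigr)\,q^{k-2}(q-1),
\]
where $a(v)$ counts the columns of $G_1$ lying in $\F_q^{*}v$, and the same formula holds for $G_2$ with the corresponding count $a'(v)$. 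Since the weight functions agree pointwise and $q^{k-2}(q-1)>0$ for $k\geq2$ (the cases $k\leq1$ being trivial), we conclude $a(v)=a'(v)$ for every nonzero $v$; that is, $G_1$ and $G_2$ have the same multiset of column directions in projective space over $\F_q$. Choosing a permutation $\sigma$ matching these directions produces scalars $\lambda_j\in\F_q^{*}$ with $h_{\sigma(j)}=\lambda_j g_j$, and the monomial matrix $M$ with entry $\lambda_j$ in position $(j,\sigma(j))$ then satisfies $G_1M=G_2$, so $C_1$ and $C_2$ are monomially equivalent.

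I expect the main obstacle to be precisely this converse direction, and within it the passage from ``every hyperplane meets the two column multisets in the same number of points'' to ``the two column multisets coincide'' — this is exactly the combinatorial content of the MacWilliams extension theorem. The elementary double count sketched above is, I believe, the cleanest route (it is essentially the argument of Bogart, Goldberg and Gordon); an alternative is the character-theoretic proof in the style of Ward and Wood, but it requires more machinery than is needed here.
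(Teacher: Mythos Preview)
The paper does not give its own proof of this statement; it merely quotes the result from the literature (the citation \cite[Corollary 1]{Bograt}) and uses it as a black box to identify monomial and isometric equivalence for the remainder of the article. So there is nothing in the paper to compare your argument against.

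That said, your proof plan is a faithful and correct reconstruction of the classical argument behind the cited result. The forward implication is indeed routine. For the converse, your reduction to showing that the column multisets of $G_1$ and $G_2$ agree as points of $\mathrm{PG}(k-1,q)$, together with the hyperplane double count yielding $\sum_{m\in v^{\perp}}\mathrm{wt}(mG_1)=(n-a(v))q^{k-2}(q-1)$, is exactly the Bogart--Goldberg--Gordon approach you name at the end. The computation is sound: for nonzero $g_j$ one has $|v^{\perp}\setminus g_j^{\perp}|=0$ when $g_j\in\F_q^{*}v$ and $q^{k-1}-q^{k-2}$ otherwise, and summing over $j$ gives your formula. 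The only small bookkeeping points are the handling of zero columns (which you address) and the degenerate cases $k\le 1$ (which you flag). No genuine gap.
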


Hence the concepts of monomial and isometric equivalence are identical
in the sense of the above theorem and from now on we mainly use the
term monomial equivalence for both the monomial and isometric
equivalence of codes.

For each $a,b \in \F_q^\ast$, the families of $a$-constacyclic and
$b$-constacyclic codes of length $n$ over $\F_q$ will be called {\em
  monomially equivalent} if there exists a one-to-one correspondence
between them that is given by an isometry of linear codes. The
following result shows that constacyclic codes with shifts constants
of the same order are monomially equivalent.

\begin{theorem}\cite[Corollary 1]{Aydin1}\label{Aydin1}
 Let $n$ be a positive integer such that $\gcd (n,q)=1$. Let $a,b \in
 \F_{q}^{*}$ such that $\ord (a)=\ord (b)$. Then the families of
 $a$-constacyclic and $b$-constacyclic codes of length $n$ over $\F_q$
 are monomially equivalent.
\end{theorem}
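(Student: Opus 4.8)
The plan is to reduce to the case where $b$ is a suitable power of $a$ and then to write down an explicit monomial matrix implementing the equivalence. Put $t=\ord(a)=\ord(b)$. Since $\F_q^\ast$ is cyclic it has a unique subgroup of order $t$, so $a$ and $b$ generate the same subgroup and $b=a^j$ for some integer $j$; necessarily $\gcd(j,t)=1$, since $\ord(a^j)=t/\gcd(j,t)$. The admissible exponents form a residue class $j_0+t\Z$, and a brief Chinese Remainder argument — any prime dividing $\gcd(n,t)$ already cannot divide $j_0$, while primes dividing $n$ but not $t$ impose no constraint modulo $t$ — lets me choose the representative $j$ so that in addition $\gcd(j,n)=1$, hence $\gcd(j,tn)=1$. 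I fix $\alpha$ as in the paper, a primitive $(tn)$-th root of unity with $\alpha^n=a$, and set $\beta:=\alpha^{j}$; then $\beta^{\,n}=a^{\,j}=b$, so $\beta$ may be used to coordinatise $b$-constacyclic codes of length $n$, and — crucially — since $t=\ord(a)=\ord(b)$ the index set $\Omega_a=\{kt+1\colon 0\le k\le n-1\}$ together with its partition into $q$-cyclotomic cosets modulo $tn$ coincides verbatim with $\Omega_b$ and its partition.

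Next I pass to the eigenvector description of the constacyclic shift, which turns the statement into a linear-algebra fact. Because $\gcd(n,q)=1$, the shift operator $S_a$ (with $S_a^{\,n}=aI$) is diagonalisable over $\overline{\F_q}$ with simple spectrum: its eigenvectors are $v_s=(\alpha^{-si})_{i=0}^{n-1}$ with eigenvalue $\alpha^{s}$ as $s$ runs over $\Omega_a$, and an $\F_q$-subspace is $a$-constacyclic exactly when it is $S_a$-invariant, i.e.\ spanned over $\overline{\F_q}$ by a $q$-closed set of these eigenlines; the same holds for $S_b$ with $w_s=(\beta^{-si'})_{i'}$, $s\in\Omega_b$. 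It therefore suffices to produce an invertible monomial matrix $M$ over $\F_q$ with $Mv_s=w_s$ for every $s\in\Omega_a=\Omega_b$, since then $M$ matches the two families of eigenlines and their Frobenius actions identically, so $C\mapsto MC$ is a dimension-preserving bijection between the two families of codes. I will take $M=D\Pi$, where $\Pi$ is the coordinate permutation whose inverse sends $i'$ to $[\,ji'\,]_n$ (a genuine permutation because $\gcd(j,n)=1$) and $D=\mathrm{diag}\!\big(a^{-\lfloor ji'/n\rfloor}\big)_{i'=0}^{n-1}$. The point of the computation is that the ``carry'' $ji'-[\,ji'\,]_n$ is always a multiple of $n$, so in verifying $Mv_s=w_s$ the extra factors collapse to powers of $\alpha^{sn}=\alpha^{n}=a$ (here $s\in\Omega_a$ is used), which is simultaneously why the entries of $D$ lie in $\F_q^\ast$ and why $M$ carries $v_s$ exactly to $w_s=(\alpha^{-jsi'})_{i'}$. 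The isometry $c\mapsto Mc$ is then the required correspondence, and by the cited equivalence of monomial and isometric equivalence this gives the theorem.

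The heart of the matter — and the reason the naive attempt fails — is $\F_q$-rationality of the monomial matrix. A pure diagonal scaling $\mathrm{diag}(1,\lambda,\lambda^{2},\dots)$ conjugating $S_a$ to $S_b$ would force $\lambda^{n}=a/b$, i.e.\ $a/b$ an $n$-th power in $\F_q^\ast$, which can genuinely fail even when $\ord(a)=\ord(b)$; and a generic $\F_q$-algebra isomorphism $\F_q[x]/\langle x^{n}-a\rangle\to\F_q[x]/\langle x^{n}-b\rangle$ need not be monomial on the standard basis. Coupling the scaling with the multiplier permutation $i'\mapsto[\,ji'\,]_n$ is exactly what forces the offending exponents to become divisible by $n$ so that the diagonal scalars reduce to powers of $a$; making the ``mod $n$'' bookkeeping of the permutation line up with the ``mod $t$'' bookkeeping of the root $\alpha$ is where the choices $\gcd(j,n)=1$ and $b=a^{j}$ are used in an essential way, and this is the step I expect to require the most care. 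Once $M$ is written down, the remaining checks — that $M$ is invertible and monomial over $\F_q$, that $Mv_s=w_s$, that $C\mapsto MC$ preserves dimension and the constacyclic property, and that it is surjective (apply the symmetric argument to $M^{-1}$, built from $j^{-1}$ modulo $tn$) — are routine.
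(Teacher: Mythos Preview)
The paper does not prove this statement itself: it is quoted from \cite[Corollary 1]{Aydin1} as a known background result, with no proof given. There is therefore nothing in the present paper to compare your argument against directly.

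Your argument is correct. The CRT step selecting $j$ with $b=a^{j}$, $\gcd(j,t)=1$ and additionally $\gcd(j,n)=1$ is sound (primes dividing both $n$ and $t$ already avoid $j_0$, and the rest are handled independently modulo $t$). The verification that $D\Pi\,v_s=w_s$ goes through because $s\equiv 1\pmod t$ forces $a^{s}=a$, so the carry $ji'-[ji']_n$ contributes exactly the factor $a^{-\lfloor ji'/n\rfloor}$ you put into $D$. Since $M=D\Pi$ is monomial over $\F_q$ and matches the $S_a$- and $S_b$-eigenlines index by index on $\Omega_a=\Omega_b$ compatibly with the Frobenius action, $C\mapsto MC$ is the desired bijection of code families.

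It is worth contrasting your construction with the device the paper itself uses in Theorem~\ref{mainTheorem}: there the isometry is $x\mapsto\xi^{i}x$, a pure diagonal scaling with no permutation component. As you correctly diagnose, such a map is well-defined only when $a/b$ is an $n$-th power in $\F_q^\ast$, which can fail under the hypothesis $\ord(a)=\ord(b)$ (for instance $a=2$, $b=3$, $n=4$ over $\F_5$); this is why Theorem~\ref{mainTheorem} carries its extra divisibility hypothesis and does not recover Theorem~\ref{Aydin1} in general. Your multiplier permutation $i'\mapsto[ji']_n$ is exactly the ingredient that bridges this gap.
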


Another result which will be generalized in the paper is the following
statement which states a sufficient condition for monomial equivalence
of cyclic and constacyclic codes.

\begin{theorem}\cite[Theorem 15]{Bierbrauer}\label{Bierbrauer}
 Let $n$ be a positive integer such that $\gcd (n,q)=1$. Let
 $a\in\F_q^\ast$ such that $\gcd (n, \ord(a))=1$. Then the families of
 $a$-constacyclic and cyclic codes of length $n$ over $\F_q$ are
 monomially equivalent.
\end{theorem}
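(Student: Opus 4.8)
The plan is to replace the abstract correspondence asserted in the statement by a single, very concrete linear map on $\F_q^n$: a diagonal monomial transformation induced by a substitution $x\mapsto\zeta x$ for a suitable $\zeta\in\F_q^\ast$. Write $t=\ord(a)$; since $a\in\F_q^\ast$, Lagrange's theorem gives $t\mid q-1$. The crux of the whole argument is the following purely number-theoretic fact, and it is exactly here that the hypothesis $\gcd(n,t)=1$ is used: \emph{$a$ is an $n$-th power in $\F_q^\ast$}. To see this, set $N=q-1$ and $d=\gcd(n,N)$; the $n$-th powers form the unique subgroup of the cyclic group $\F_q^\ast$ of order $N/d$, and an element of order $t$ lies in that subgroup precisely when $t\mid N/d$. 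Now $d\mid n$ forces $\gcd(t,d)=1$, which together with $t\mid N$ gives $td\mid N$, hence $t\mid N/d$, as required. Fix any $\zeta\in\F_q^\ast$ with $\zeta^{n}=a^{-1}$.

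Next I would use $\zeta$ to identify the two ambient algebras, defining
\[
\Phi\colon\F_q[x]/\langle x^n-1\rangle\longrightarrow\F_q[x]/\langle x^n-a\rangle,\qquad
\Phi\bigl(f(x)\bigr)=f(\zeta x).
\]
This is well defined because $\Phi(x^n-1)=\zeta^{n}x^n-1\equiv\zeta^{n}a-1=0$ in the target ring (where $x^n=a$); it is plainly an $\F_q$-algebra homomorphism; and it is bijective, with inverse $h(x)\mapsto h(\zeta^{-1}x)$, which is well defined for the symmetric reason $(\zeta^{-1})^{n}=a$. Being a ring isomorphism, $\Phi$ sends ideals to ideals. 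By the ring--ideal correspondence recalled in the Preliminaries, the ideals of $\F_q[x]/\langle x^n-1\rangle$ are exactly the cyclic codes of length $n$ over $\F_q$, and the ideals of $\F_q[x]/\langle x^n-a\rangle$ are exactly the $a$-constacyclic codes of length $n$ over $\F_q$; hence $\Phi$ restricts to a bijection between these two families, with $\Phi^{-1}$ realising the inverse bijection.

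Finally I would record the metric behaviour of $\Phi$. Since $\Phi(x^i)=\zeta^{i}x^i$ for $0\le i\le n-1$, in coordinates $\Phi$ is nothing but multiplication by the diagonal monomial matrix $\operatorname{diag}(1,\zeta,\zeta^{2},\ldots,\zeta^{n-1})$. As each diagonal entry is a nonzero scalar, $\Phi$ preserves the support, hence the Hamming weight, of every vector, and therefore restricts to a linear isometry between any cyclic code $C$ and its image $\Phi(C)$. Thus $\Phi$ (together with $\Phi^{-1}$) gives a one-to-one correspondence between the family of cyclic codes and the family of $a$-constacyclic codes of length $n$ over $\F_q$ that is realised by an isometry of linear codes, which is precisely the claimed monomial equivalence of the two families; in particular a cyclic code and its image share the same parameters $[n,k,d]_q$.

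The only step I expect to demand genuine thought is the number-theoretic one above, i.e.\ that $a$ has an $n$-th root \emph{inside} $\F_q$ (a root lying only in an extension would be useless, since the substitution would then fail to preserve $\F_q$-linear codes); everything afterwards is a formal check that $x\mapsto\zeta x$ is a diagonal monomial isomorphism of the two quotient rings. If desired one can cross-check consistency at the level of defining sets, where $\Phi$ induces a bijection of $q$-cyclotomic cosets and so matches generator polynomials, but this is automatic from $\Phi$ being a ring isomorphism and is not needed for the proof.
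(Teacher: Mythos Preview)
Your proof is correct. The number-theoretic step---that $\gcd(n,\ord(a))=1$ forces $a$ (equivalently $a^{-1}$) to be an $n$-th power in $\F_q^\ast$---is cleanly argued, and once $\zeta^n=a^{-1}$ is in hand the substitution $x\mapsto\zeta x$ is indeed a diagonal monomial ring isomorphism between $\F_q[x]/\langle x^n-1\rangle$ and $\F_q[x]/\langle x^n-a\rangle$, carrying ideals to ideals and preserving Hamming weight.

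The paper does not give a standalone proof of this statement (it is quoted from Bierbrauer) but instead derives it later as a special case of its main Theorem~\ref{mainTheorem}, by setting $i_1=0$, $i_2=i$ (where $a=\xi^i$), $s=0$, and checking that $\gcd(n,q-1)\mid\gcd(i,q-1)$ follows from $\gcd(n,\ord(a))=1$. If one unwinds the proof of Theorem~\ref{mainTheorem} in that special case, the map produced is again $\Phi(x)=\xi^{j}x$ with $(\xi^{j})^{n}=a^{-1}$, only now the exponent $j=\gamma\beta'+\theta$ is manufactured through the auxiliary integers $\gamma,\beta',\theta$ of that proof rather than by your direct cyclic-group existence argument. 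So the underlying mechanism is identical---a diagonal monomial map arising from $x\mapsto\zeta x$ with $\zeta^n=a^{-1}$---but your route is self-contained and shorter, while the paper's detour through Theorem~\ref{mainTheorem} pays off by simultaneously handling the broader situation of two shift constants with different orders (including cases with $\gcd(n,\ord(a))>1$).
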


\section{Equivalence of constacyclic codes with  shift constants of different orders}\label{S2}

In this section, we present novel results on the monomial equivalence
of constacyclic codes over the finite field $\F_q$.  To the best of
our knowledge, there has been no comprehensive discussion on the
monomial equivalence of constacyclic codes with shift constants of
different orders.  Therefore, we introduce new criteria for
establishing monomial equivalence of constacyclic codes that go beyond
the result of Theorem \ref{Aydin1}. Our result also generalizes that
of Theorem \ref{Bierbrauer} by identifying monomially equivalent
constacyclic and cyclic codes that were not previously detectable
using existing methods. These results will be applied in Section
\ref{S:examples} to identify new record-breaking linear codes.

 For the rest of this section, we fix $\xi$ to be a generator of the
 multiplicative group $\F_q^\ast$, i.e.,
 \[
 \F_q ^{*}=\langle \xi \rangle.
 \] 
Our main result of this section is given below which discusses
monomial equivalence of constacyclic codes with shift constants of
not necessarily the same order.
 
\begin{theorem}\label{mainTheorem}
  Let $\xi ^{i_1}, \xi^{i_2} \in \F_q^\ast$ for some $0\le i_1,i_2 \le
  q-2$ such that $i_1=i_2 s$ for some non-negative integer $s$. Let $n$ be
  a positive integer such that $\gcd (n,q)=1$ and $\gcd (n,q-1)=m$. If
  $m \mid \gcd (i_2 (s-1),q-1)$, then the families of
  $\xi^{i_1}$-constacyclic and $\xi^{i_2}$-constacyclic codes of
  length $n$ over $\F_q$ are monomially equivalent.
\end{theorem}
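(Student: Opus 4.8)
The plan is to realize the equivalence concretely at the level of polynomial rings, by exhibiting a map between $\F_q[x]/\langle x^n-\xi^{i_1}\rangle$ and $\F_q[x]/\langle x^n-\xi^{i_2}\rangle$ that sends ideals to ideals and is induced by a monomial transformation of $\F_q^n$. The natural candidate is the substitution $x \mapsto \lambda x$ for a suitable scalar $\lambda \in \F_q^\ast$: such a map carries $x^n - \xi^{i_2}$ to $\lambda^n x^n - \xi^{i_2}$, which (up to the unit $\lambda^n$) is $x^n - \lambda^{-n}\xi^{i_2}$. So the first step is to ask when we can choose $\lambda$ with $\lambda^{-n}\xi^{i_2} = \xi^{i_1}$, i.e. $\lambda^n = \xi^{i_2 - i_1} = \xi^{i_2(1-s)}$. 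Writing $\lambda = \xi^j$, this is the congruence $nj \equiv i_2(1-s) \pmod{q-1}$, which is solvable in $j$ precisely when $\gcd(n,q-1) = m$ divides $i_2(s-1)$ — exactly the hypothesis. Thus the scalar $\lambda$ exists, and the substitution $x \mapsto \lambda x$ gives an $\F_q$-algebra isomorphism $\F_q[x]/\langle x^n - \xi^{i_1}\rangle \to \F_q[x]/\langle x^n - \xi^{i_2}\rangle$.

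The second step is to check that this algebra isomorphism induces a monomial equivalence on the corresponding families of codes. Under the identification of a constacyclic code with an ideal, a codeword $(c_0, c_1, \ldots, c_{n-1})$ corresponds to $\sum_i c_i x^i$; the substitution $x \mapsto \lambda x$ sends this to $\sum_i c_i \lambda^i x^i$, i.e. it acts on coordinate vectors by the diagonal monomial matrix $\mathrm{diag}(1,\lambda,\lambda^2,\ldots,\lambda^{n-1})$. A diagonal matrix with nonzero entries is a monomial matrix, so it preserves Hamming weight, and the map is an $\F_q$-linear bijection between the two ring quotients; hence it restricts to a weight-preserving bijection between ideals, i.e. between the $\xi^{i_1}$-constacyclic and $\xi^{i_2}$-constacyclic codes. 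By the cited equivalence of monomial and isometric equivalence (Theorem from \cite{Bograt}), this establishes monomial equivalence of the two families.

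The main point requiring care — and where I expect the only real subtlety — is the bookkeeping with the fixed roots of unity and defining sets, because the statement as phrased concerns \emph{families} of codes and a one-to-one correspondence between them given by an isometry, and one should make sure the substitution map is genuinely well-defined as a ring isomorphism (the ideal $\langle x^n - \xi^{i_1}\rangle$ must pull back correctly) and that it is a bijection \emph{onto} the second family, not merely into it. Since the substitution $x\mapsto \lambda x$ is invertible (inverse $x \mapsto \lambda^{-1}x$) and sends one defining ideal exactly onto the other by the computation $\lambda^n = \xi^{i_2-i_1}$, both of these are automatic; the inverse substitution gives the reverse correspondence. One should also note the role of the hypothesis $i_1 = i_2 s$: it is what lets us write the target exponent difference as $i_2(1-s)$, so that the divisibility condition $m \mid i_2(s-1)$ is exactly the solvability criterion for $nj \equiv i_2(1-s) \pmod{q-1}$. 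Finally, as a sanity check one should verify that the theorem specializes correctly — taking $\xi^{i_2}$ a primitive element and tracing through should recover both Theorem \ref{Aydin1} (when $\ord(\xi^{i_1}) = \ord(\xi^{i_2})$) and, in the appropriate regime, Theorem \ref{Bierbrauer}.
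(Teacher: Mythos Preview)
Your proposal is correct and follows essentially the same approach as the paper: both construct the isometry via the substitution $x \mapsto \xi^{j} x$ for an exponent $j$ satisfying $nj \equiv i_1 - i_2 = i_2(s-1) \pmod{q-1}$, and both observe that this map is a weight-preserving ring isomorphism between $\F_q[x]/\langle x^n-\xi^{i_1}\rangle$ and $\F_q[x]/\langle x^n-\xi^{i_2}\rangle$. The only difference is cosmetic: the paper explicitly builds $j$ by writing $n=m\beta$, $q-1=m\theta$, $i_2(s-1)=m\gamma$ and setting $j=\gamma\beta'+\theta$ with $\beta\beta'\equiv 1\pmod{\theta}$, whereas you invoke the standard solvability criterion $\gcd(n,q-1)\mid i_2(s-1)$ for the linear congruence directly --- which is arguably cleaner.
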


\begin{proof}
Let $n=m \beta$ for some positive integer $\beta$, where $\gcd (\beta
,q-1)=1$. The fact that $m\mid\gcd (i_2 (s-1),q-1)$ implies that there
exist integers $\gamma$ and $\theta$ such that $m \gamma
=i_2 (s-1)$ and $m \theta=q-1$.  Since $\gcd (\beta ,q-1)=1$, we have
$\gcd (\beta ,\theta)=1$. So there exists $1\le \beta' \le \theta-1$
such that $\beta \beta ' \equiv 1 \pmod \theta$.  Next, we show that
the map
\[
\Phi\colon\F_q [x]/ \langle x^n -\xi ^{i_1} \rangle \longrightarrow\F_q [x] / \langle x^n -\xi ^{i_2} \rangle
\]
defined by $\Phi (x)= \xi ^i x$, where $i=\gamma \beta' +\theta$, is an isometry of linear codes. 
First we show that it is well-defined. It is sufficient to show that
\begin{align*}
\Phi (x^n -\xi ^{i_1}) =\xi ^{in}x^{n}-\xi ^{i_1} = \xi ^{in}(x^n -\xi ^{i_1 -in}) \in \langle x^n -\xi ^{i_2} \rangle.
\end{align*}
Note that since $\beta \beta ' \equiv 1 \pmod{\theta}$, there exists
an integer $\ell$ such that $\beta \beta ' =1+\ell\theta$. Using the
facts that $i_1=i_2 s$, $i=\gamma\beta'+\theta$, $n=m\beta$,  $m\gamma
=i_2 (s-1)$, and $m\theta=q-1$, we compute
\begin{equation}
\begin{split}
i_1 - in &= i_1 -(\gamma\beta'+\theta)m\beta =  i_2s- \gamma m (1+\ell\theta) = i_2s- i_2(s-1) - (q-1)\gamma \ell  \\
& \equiv i_2 \pmod{q-1}.
\end{split}
\end{equation}
This shows that $\Phi (x^n -\xi ^{i_1}) \in \langle x^n -\xi ^{i_2}
\rangle$ and therefore $\Phi$ is well-defined. The map $\Phi$ is an
evaluation map, so it is a ring homomorphism, and it preserves the
weight of vectors. Thus it is enough to show that $\Phi$ is
surjective, and the fact that it is a map between finite sets implies
that $\Phi$ is a ring isomorphism. Let $p(x) \in \F_q [x] / \langle
x^n -\xi ^{i_2} \rangle$. Then $p'(x)=p(\xi^{-i}x) \in \F_q [x]/
\langle x^n -\xi ^{i_1} \rangle$ and $\Phi(p'(x))=p(x)$. This proves
that $\Phi$ is surjective. Thus it is a ring isomorphism that
preserves the distance between each pair of vectors.
\end{proof}

Note that in Theorem \ref{mainTheorem} the condition $i_1=i_2 s$
implies that $\ord(\xi ^{i_1}) \mid \ord (\xi ^{i_2})$.  The next
corollary gives a special case of Theorem \ref{mainTheorem} which can
be easily applied to check the monomial equivalence of certain
constacyclic codes.

\begin{corollary}\label{GeneralizedAydingcd1}
Let $a,b \in \F_q ^{*}$ such that $\ord (a) \mid \ord (b)$. Let $n$ be
a positive integer such that $\gcd (n,q)=\gcd (n,q-1)=1$. Then the
families of $a$-constacyclic and $b$-constacyclic codes of length $n$
over $\F_q$ are monomially equivalent.
\end{corollary}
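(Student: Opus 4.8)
The plan is to obtain this corollary as an immediate specialization of Theorem \ref{mainTheorem}, the only real work being to translate the divisibility hypothesis $\ord(a)\mid\ord(b)$ into the multiplicative shape $i_1=i_2 s$ that the theorem requires. First I would record the elementary fact that, since $\F_q^\ast=\langle\xi\rangle$ is cyclic of order $q-1$, it has a unique subgroup of each order dividing $q-1$; hence $\ord(a)\mid\ord(b)$ forces $\langle a\rangle\subseteq\langle b\rangle$ and in particular $a\in\langle b\rangle$. Therefore there is a non-negative integer $s$ with $a=b^{s}$. Writing $b=\xi^{i_2}$ with $0\le i_2\le q-2$ and putting $i_1=i_2 s$, we get $\xi^{i_1}=\xi^{i_2 s}=b^{s}=a$, so $a$ and $b$ are realized as the powers $\xi^{i_1},\xi^{i_2}$ with $i_1=i_2 s$, exactly as in the hypothesis of Theorem \ref{mainTheorem}.

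Next I would invoke the remaining hypothesis $\gcd(n,q-1)=1$, which says precisely that $m=1$ in the notation of Theorem \ref{mainTheorem}. Then the arithmetic condition $m\mid\gcd(i_2(s-1),q-1)$ becomes $1\mid\gcd(i_2(s-1),q-1)$, which is vacuously true. Consequently Theorem \ref{mainTheorem} applies verbatim and produces a weight‑preserving ring isomorphism $\F_q[x]/\langle x^n-a\rangle\to\F_q[x]/\langle x^n-b\rangle$, i.e.\ a monomial (equivalently isometric) equivalence between the families of $a$-constacyclic and $b$-constacyclic codes of length $n$ over $\F_q$, which is the claim.

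I do not expect a genuine obstacle here; the one place that needs a touch of care is the bookkeeping when passing from $\ord(a)\mid\ord(b)$ to $a=b^{s}$ to $i_1=i_2 s$. In particular one should observe that the normalization $0\le i_1,i_2\le q-2$ appearing in Theorem \ref{mainTheorem} is never used in its proof, so it is harmless that the integer $i_1=i_2 s$ chosen above may exceed $q-2$ (if one prefers, one can instead take $i_1\equiv i_2 s\pmod{q-1}$ in range and re-run the congruence computation, which is unaffected modulo $q-1$). It is also worth noting that the degenerate cases are covered: $s=0$ corresponds to $a=1$, where the $a$-constacyclic family is the family of cyclic codes, and $\ord(a)=\ord(b)$ recovers (the $\gcd(n,q-1)=1$ instance of) Theorem \ref{Aydin1}; in all cases the conclusion follows once $m=1$.
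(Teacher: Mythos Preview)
Your proposal is correct and follows essentially the same approach as the paper: write $a=\xi^{i_1}$, $b=\xi^{i_2}$, use $\ord(a)\mid\ord(b)$ to obtain $i_1=i_2 s$, and then invoke Theorem~\ref{mainTheorem} with $m=\gcd(n,q-1)=1$ so that the divisibility condition $m\mid\gcd(i_2(s-1),q-1)$ is vacuous. Your treatment is in fact slightly more careful than the paper's, since you explicitly justify $a\in\langle b\rangle$ via the unique-subgroup property of cyclic groups and you flag (correctly) that the resulting exponent $i_1=i_2 s$ may fall outside the nominal range $0\le i_1\le q-2$, while observing that this range plays no role in the proof of Theorem~\ref{mainTheorem}.
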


\begin{proof}
Let $a=\xi^{i_1}$ and $b=\xi^{i_2}$ for some $0\le i_1,i_2 \le
q-2$. Then $\ord(a)\mid\ord(b)$ implies that there exists a
non-negative integer $s$ such that $i_1=i_2s$.  Now the proof follows
from Theorem \ref{mainTheorem} since $1 \mid \gcd(i_2(s-1),q-1)$.
\end{proof}

Next, we provide an example of monomially equivalent constacyclic
codes that cannot be obtained by Theorem \ref{Aydin1}. Nevertheless,
such cases can be justified using the outcome of Corollary
\ref{GeneralizedAydingcd1}.

\begin{example}
Let $a=4$ and $b=2$ be elements of $\F_5$. We have $\ord (2)=4$ and
$\ord (4)=2$ and hence $\ord (4) \mid \ord (2)$. By Corollary
\ref{GeneralizedAydingcd1} we have that $2$-constacyclic and
$4$-constacyclic codes of length $n$ over $\F_5$ are monomially
equivalent for each $n$ satisfying $\gcd(n,5)=\gcd(n,4)=1$.
\end{example}

Next, we provide another example of monomially equivalent constacyclic
codes that can be detected by Theorem \ref{mainTheorem}, but not by
Corollary \ref{GeneralizedAydingcd1}. This stresses that the result of
Corollary \ref{GeneralizedAydingcd1} is only a special case of Theorem
\ref{mainTheorem},  although Corollary \ref{GeneralizedAydingcd1} can
be applied easier in practice.

\begin{example}
Let $a=6$ and $b=5$ be elements of $\F_7^\ast=\langle 5 \rangle$. We
have $6=5^3$ in $\F_7$. Let $n$ be a positive integer such that $\gcd
(n,7)=1$ and $\gcd (n,6)=2$. Let $i_1=3$, $i_2=1$, and $s=3$
($a=5^{i_1}$, $b=5^{i_2}$, and $i_1=i_2s$). Note that  $2 \mid
\gcd(i_2(3-1),6)=\gcd (2,6)$.  Hence Theorem \ref{mainTheorem} shows
that $a$-constacyclic and $b$-constacyclic codes of length $n$ over
$\F_7$ are monomially equivalent. On the other hand, Corollary
\ref{GeneralizedAydingcd1} requires $\gcd(n,7)=\gcd(n,6)=1$.
\end{example}

Recall that in a cyclic group of order $m$ generated by an element
$a$, we have $\ord(a^k)=\frac{m}{\gcd(m,k)}$ for any $0\le k \le
m-1$. Next we prove that Theorem \ref{Bierbrauer} is a special case of
Theorem \ref{mainTheorem}.

Let $C$ be an $a$-constacyclic code over $\F_q$ of length $n$ such
that $\gcd(n,q)=1$ and $\gcd(n,\ord(a))=1$. Suppose that $a=\xi^i$ for
some $0\le i \le q-2$. Since $\gcd(n,\ord(a))=1$ and
$\ord(a)=\frac{q-1}{\gcd(i,q-1)}$, we have $\gcd(n,
\frac{q-1}{\gcd(i,q-1)})=1$. This implies that
$\gcd(n,q-1)=\gcd(n,\gcd(i,q-1))$.  Note also that the latter equality
implies that $\gcd(n,q-1) \mid \gcd(i,q-1)$.  Now $\xi^i$ and $\xi^0$
satisfy the conditions of Theorem \ref{mainTheorem}. In other words,
putting $i_2=i$ and $i_1=0$ implies that $\gcd(n,q-1) \mid
\gcd(i(0-1),q-1)=\gcd(i,q-1)$.  So Theorem \ref{mainTheorem} results
that the families of $a$-constacyclic and cyclic codes of length $n$
over $\F_q$ are monomially equivalent.

In the following example, we show that Theorem \ref{mainTheorem}
offers a wider range of applicability than Theorem \ref{Bierbrauer} in
determining monomially equivalent constacyclic and cyclic codes.

\begin{example}\label{ex.F5Bier}
In the field $\F_5$, we have $\F_5^\ast=\langle 2\rangle$.  Let $n$ be
a positive integer in the form $n=4k+2$ such that $\gcd (n, 5)=1$. We
show the families of $4$-constacyclic codes and cyclic codes of length
$n$ over $\F_5$ are monomially equivalent.  First note that
$\ord(4)=2$ and $\gcd (n, \ord(4))=2$. Therefore, we cannot apply
Theorem \ref{Bierbrauer} in this case. Moreover, we have $\gcd
(n,5-1)=2$. Applying the result of Theorem \ref{mainTheorem} to
$2^{i_2}$ and $2^{i_1}$ for $i_2=2$ and $i_1=0$ implies that
$4$-constacyclic and cyclic codes of length $n$ over $\F_5$ are
monomially equivalent.
\end{example}

This example can be generalized as follows. 

\begin{corollary}\label{C:3.6}
Let $n$ and $m$ be positive integers such that $\gcd(n,q-1)=m$. Then the families of $\xi^{mr}$-constacyclic and cyclic codes of length $n$ over $\F_q$ are monomially equivalent, for any positive integer $r$.
\end{corollary}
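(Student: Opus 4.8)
The plan is to obtain this statement as an immediate specialization of Theorem \ref{mainTheorem}. Concretely, I would set $i_1 = 0$, let $i_2$ be the residue of $mr$ modulo $q-1$ (so that $0 \le i_2 \le q-2$ and $\xi^{i_2} = \xi^{mr}$), and take $s = 0$. With these choices the cyclic code corresponds to the shift constant $\xi^{i_1} = \xi^0 = 1$, and the hypothesis $i_1 = i_2 s$ of Theorem \ref{mainTheorem} holds trivially, since $0 = i_2 \cdot 0$.

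The one thing that genuinely needs checking is the divisibility condition $m \mid \gcd(i_2(s-1), q-1)$. Substituting $s = 0$, this becomes $m \mid \gcd(-i_2, q-1) = \gcd(i_2, q-1)$. Here I would invoke two elementary facts: first, $m = \gcd(n,q-1)$ divides $q-1$; second, $i_2 \equiv mr \pmod{q-1}$ together with $m \mid q-1$ forces $m \mid i_2$. Hence $m$ divides both $i_2$ and $q-1$, so $m \mid \gcd(i_2, q-1)$, and the hypothesis of Theorem \ref{mainTheorem} is satisfied. Since $\gcd(n,q)=1$ is a standing assumption in the paper, all hypotheses of Theorem \ref{mainTheorem} are in force, and it yields the monomial equivalence of the families of $\xi^{i_2}$-constacyclic (equivalently, $\xi^{mr}$-constacyclic) and cyclic codes of length $n$ over $\F_q$.

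There is no real obstacle here; the only points requiring a little care are purely bookkeeping: reducing the exponent $mr$ modulo $q-1$ so that it lies in the admissible range $0 \le i_2 \le q-2$, and noting that this does not change the constacyclic family (as $\xi$ has order $q-1$ in $\F_q^\ast$), together with the degenerate case $mr \equiv 0 \pmod{q-1}$, in which the assertion is vacuous because $\xi^{mr} = 1$ already gives cyclic codes. For context I would also remark that Example \ref{ex.F5Bier} is precisely the instance $q = 5$, $m = 2$, $r = 1$ of this corollary, which makes the generalization transparent and serves as a sanity check on the choice of parameters.
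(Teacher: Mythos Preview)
Your proposal is correct and follows essentially the same route as the paper: both set $i_1 = 0$, take $s = 0$, and verify that $m \mid \gcd(i_2(s-1),q-1)$ using $m \mid q-1$ and $m \mid i_2$, then invoke Theorem \ref{mainTheorem}. Your version is in fact slightly more careful than the paper's, since you reduce $mr$ modulo $q-1$ to place $i_2$ in the required range $0 \le i_2 \le q-2$ and explicitly dispose of the degenerate case $\xi^{mr}=1$.
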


\begin{proof}
Let $i_2=mr$ and $i_1=0$. Next we apply Theorem \ref{mainTheorem} to $\xi^{i_2}$ and $\xi^{i_1}$. First note that $\gcd(n,q-1)=m$ and $m\mid \gcd(i_2(0-1),q-1)$. Now all the conditions of Theorem \ref{mainTheorem} are satisfied and thus the result follows from it. 
\end{proof}

By Corollary \ref{C:3.6}, if $\gcd(n,q-1)=m$, then $\xi^{mr}$-constacyclic and cyclic codes of length $n$ over $\F_q$ are monomially equivalent for any positive integer $r$. Suppose additionally that $\gcd(r,m)=1$ and $m^2 \mid q-1$. Then $m \mid \frac{q-1}{\gcd(mr,q-1)}$ yielding $m\mid \ord(\xi^{mr})$. As $m \mid n$, we have $m \mid \gcd(n,\ord(\xi^{mr}))$ implying that  Theorem \ref{Bierbrauer} cannot be applied to prove monomial equivalence  when $\gcd(r,m)=1$ and $m^2 \mid q-1$.

\section{Classification of equivalent constacyclic codes over Small Finite Fields}\label{S3}

In this section, we study the restriction of the results in the
previous section to constacyclic codes over small finite fields when
$q\le 7$. This illustrates our new results and
also helps to classify families of monomially equivalent constacyclic
codes over small finite fields. In the next section, in our computer
search for new linear codes, we will apply the results of this section
to speed up the search. In fact we give at least one new
record-breaking linear code over each of the finite fields discussed
in this section.
 
We use graphs to depict the monomial equivalence of constacyclic
codes. In particular, for each finite field $\F_q$, we use a graph in
which a vertex labeled with $a$ corresponds to the family of
$a$-constacyclic codes of length $n$ over $\F_q$. Moreover, each edge
in the graph represents the condition under which two families of
constacyclic codes are monomially equivalent. Such a graph will be
called the ``equivalence graph" of constacyclic codes.
Recall that we only consider constacyclic codes of length
$n$ over $\F_q$ when $\gcd(n,q)=1$.

The first nontrivial constacyclic codes appear over $\F _3$. By
Theorem \ref{Bierbrauer}, if $\gcd (n,2)=1$, the families of
cyclic and $2$-constacyclic codes of length $n$ over $\F_3$ are
monomially equivalent (see Figure \ref{F1}).
\begin{figure}[h!]
\begin{center}
\begin{tikzpicture} 
\Vertex[label=$1$, size=0.7, fontscale=1 , opacity =.2 ]{1}
\Vertex[x=4, label=$2$, size=0.7, fontscale=1 , opacity =.2]{2}
\Edge[color=black, label={\ $\gcd(n,2)=1$ \ \ }](1)(2)
\end{tikzpicture}
\caption{Monomially equivalent $a$-constacyclic codes of length $n$ over $\F_3$ with $a\in \{1,2\}$.}
\label{F1}
\end{center}
\end{figure}
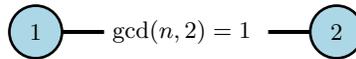

The next example shows that if $\gcd (n,2)\neq 1$, then cyclic and
$2$-constacyclic codes of length $n$ over $\F_3$ are not necessarily
monomially equivalent.

\begin{example}
Let $n=10$. Then $\gcd (10,3)=1$ and $\gcd (10,2)=2$. The irreducible
decompositions of $x^{10}-1$ and $x^{10}-2$ are
\[
x^{10}-1=(x + 1)(x + 2)(x^4 + x^3 + x^2 + x + 1)(x^4 + 2x^3 + x^2 + 2x + 1)
\]
and 
\[
x^{10}-2=(x^2 + 1)(x^4 + x^3 + 2x + 1)(x^4 + 2x^3 + x + 1).
\]
Therefore, the number of cyclic codes of length $10$ over $\F_3$ is
equal to $16$, but the number of $2$-constacyclic codes of length $10$
over $\F_3$ is equal to $8$. Hence the families of cyclic and
$2$-constacyclic codes of length $10$ over $\F_3$ are not monomially
equivalent.
\end{example}

Let $\F_4=\{0,1,\omega,\omega^2\}$, where $\omega^2=\omega+1$. Note
that $\omega$ and $\omega^2$ have the same multiplicative order. So by
Theorem \ref{Aydin1}, the families of $\omega$-constacyclic and
$\omega^2$-constacyclic codes of length $n$ over $\F_4$ are monomially
equivalent. Moreover, Theorem \ref{Bierbrauer} implies that the
families of cyclic and $\omega$-constacyclic codes of length $n$ over
$\F_4$ are monomially equivalent provided that $\gcd(n,3)=1$. Figure
\ref{F2} gives the equivalence graph of constacyclic codes over
$\F_4$. We use the notation $a\approx b$ to say that the families of
$a$-constacyclic and $b$-constacyclic codes of length $n$ are
monomially equivalent.

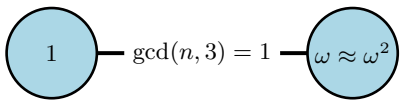
\begin{figure}[h!]
\begin{center}

\begin{tikzpicture} 
\Vertex[label=$1$, size=1.2, fontscale=1 , opacity =.2 ]{1}
\Vertex[x=4, label=$\omega\approx \omega^2$, size=1.2, fontscale=1 , opacity =.2]{2}
\Edge[ color=black, label={\ $\gcd(n,3)=1$\  \ }](1)(2)
\end{tikzpicture}
\caption{Monomially equivalent $a$-constacyclic codes of length $n$ over $\F_4$ with $a\in \{1,\omega,\omega^2\}$.}
\label{F2}
\end{center}

\end{figure}

Next we consider constacyclic codes over $\F_5$. First note that
$\ord(2)=\ord(3)=4$ and $\ord(4)=2$. Thus the families of
$2$-constacyclic and $3$-constacyclic codes over $\F_5$ are monomially
equivalent by Theorem \ref{Aydin1}. Moreover, Theorem \ref{Bierbrauer}
implies that if $\gcd(n,2)=1$, then
 \begin{itemize}
\item families of cyclic and $2$-constacyclic codes of length $n$ over $\F_5$ are monomially equivalent.
\item families of cyclic and $4$-constacyclic codes of length $n$ over $\F_5$ are monomially equivalent.
\end{itemize}
Next we give new conditions for monomial equivalence of constacyclic codes over $\F_5$ which cannot be deduced from
Theorems \ref{Aydin1} and \ref{Bierbrauer}.

\begin{proposition}\label{Thm F_5 (2,4)}
Let $\gcd (n,5)=\gcd (n,2)=1$, then $2$-constacyclic and $4$-constacyclic codes of length $n$ over $\F_5$ are monomially equivalent.
\end{proposition}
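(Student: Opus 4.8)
The plan is to apply Theorem~\ref{mainTheorem} directly with a well-chosen pair of exponents. Fix the generator $\xi = 2$ of $\F_5^\ast$, so that $2 = \xi^{1}$ and $4 = \xi^{2}$. Since $\ord(4) = 2$ divides $\ord(2) = 4$, the family of $4$-constacyclic codes should play the role of the $\xi^{i_1}$-family and the $2$-constacyclic codes that of the $\xi^{i_2}$-family. Concretely, I would set $i_1 = 2$ (so $\xi^{i_1} = 4$), $i_2 = 1$ (so $\xi^{i_2} = 2$), and $s = 2$, and observe that $i_1 = i_2 s$, as required by the hypothesis of Theorem~\ref{mainTheorem}.

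Next I would check the two numerical conditions. Here $q = 5$, so $q-1 = 4$; the assumption $\gcd(n,2)=1$ forces $\gcd(n,4)=1$, hence $m := \gcd(n,q-1) = 1$. The remaining requirement $m \mid \gcd\!\bigl(i_2(s-1),\,q-1\bigr)$ then becomes $1 \mid \gcd(1,4)$, which is automatic. With all hypotheses of Theorem~\ref{mainTheorem} verified, it follows that the families of $4$-constacyclic and $2$-constacyclic codes of length $n$ over $\F_5$ are monomially equivalent. Alternatively, one can simply quote Corollary~\ref{GeneralizedAydingcd1}: since $\ord(4) = 2 \mid 4 = \ord(2)$ and $\gcd(n,5) = \gcd(n,4) = 1$ (again because $\gcd(n,2)=1$), the corollary gives the statement at once.

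There is essentially no real obstacle in this proof: once $m$ collapses to $1$, the divisibility condition in Theorem~\ref{mainTheorem} is trivially satisfied, and the only point requiring a moment's thought is the correct assignment of $i_1$ and $i_2$ (the shift constant of smaller multiplicative order must be the $\xi^{i_1}$ one, since the hypothesis $i_1 = i_2 s$ encodes $\ord(\xi^{i_1}) \mid \ord(\xi^{i_2})$). For completeness I would also remark why this equivalence is new: Theorem~\ref{Aydin1} does not apply because $\ord(2) \neq \ord(4)$, and Theorem~\ref{Bierbrauer} concerns a constacyclic family and a cyclic one, whereas here both $2 \neq 1$ and $4 \neq 1$.
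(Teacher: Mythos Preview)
Your proof is correct and essentially matches the paper's approach: the paper simply notes that $\ord(4)\mid\ord(2)$ and invokes Corollary~\ref{GeneralizedAydingcd1}, which is precisely the alternative you spell out at the end. Your direct application of Theorem~\ref{mainTheorem} with $m=1$ is just the unpacked version of that corollary, so there is no substantive difference.
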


\begin{proof}
Note that $\ord (4) \mid \ord (2)$ and thus the claim follows from Corollary \ref{GeneralizedAydingcd1}.
\end{proof}

It should be noted that if, for example, $\gcd (n,5)=1$ and $\gcd
(n,2)\neq 1$, then the families of $2$-constacyclic and
$4$-constacyclic codes of length $n$ over $\F_5$ are not necessarily
monomially equivalent, as demonstrated next.
\begin{example}
Let $n=12$. Then $\gcd (n,5)=1$ and $\gcd (n,2)=2$.
The total number of $2$-constacyclic codes of length $12$ over $\F_5$ is $8$, but the number of $4$-constacyclic codes of length $12$ over $\F_5$ is $64$. 
Therefore, the families of $2$-constacyclic and $4$-constacyclic codes of length $12$ over $\F_5$ are not monomially equivalent.
\end{example}

Hence, the $\gcd$ condition of Proposition \ref{Thm F_5 (2,4)} cannot
be improved for all values of $n$.  Note also that, as shown in
Example \ref{ex.F5Bier}, if $\gcd(n,4)=2$, then the families of
$4$-constacyclic and cyclic codes of length $n$ over $\F_5$ are
monomially equivalent. The equivalence graph of constacyclic codes
over $\F_5$ is given in Figure \ref{F3}.

\begin{figure}[h!]
\begin{center}
\begin{tikzpicture} 
\Vertex[label=$1$, size=1, fontscale=1, opacity =.2 ]{1}
\Vertex[x=3,y=3, label=$2\approx 3$, size=1, fontscale=1 , opacity =.2]{2}
\Vertex[x=6,label=$4$, size=1 , fontscale=1 , opacity =.2]{3}
\Edge[Math, color=black, label={\hbox to 20pt{\hss$\gcd(n,2)=1$\hss}}, distance=.5](1)(2)
\Edge[Math, color=black, label={\hbox to 20pt{\hss$\gcd(n,2)=1$\hss}}, distance=.5](2)(3)
\Edge[Math, color=black, label={\ \gcd(n,4)\in\{1,2\} \ }](3)(1) 
\end{tikzpicture}
\caption{Monomially equivalent $a$-constacyclic codes of length $n$ over $\F_5$ with $a\in \{1,2,3,4\}$.}
\label{F3}
\end{center}
\end{figure}
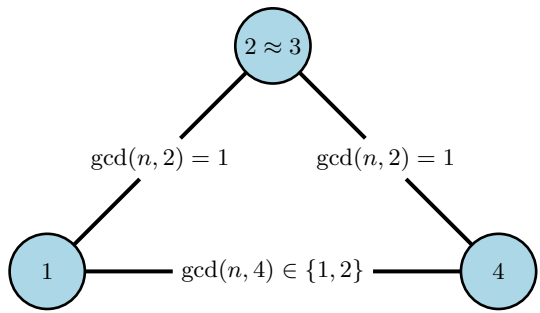

Next we study constacyclic codes over $\F_7$ of length $n$ such that $\gcd(n,7)=1$. First note that $\ord(3)=\ord(5)=6$, $\ord(2)=\ord(4)=3$, and $\ord(6)=2$.
So by Theorem \ref{Aydin1}, the families of 
\begin{itemize}
\item $2$-constacyclic and $4$-constacyclic codes 
\item  $3$-constacyclic and $5$-constacyclic codes 
\end{itemize}
of length $n$ over $\F_7$ are monomially equivalent. Moreover, by
Theorem \ref{Bierbrauer}, the following families of codes are
monomially equivalent:
\begin{itemize}
\item cyclic and $2$-constacyclic codes of length $n$ over $\F_7$ when $\gcd(n,3)=1$;
\item cyclic and $3$-constacyclic codes of length $n$ over $\F_7$ when $\gcd(n,6)=1$;
\item cyclic and $6$-constacyclic codes of length $n$ over $\F_7$ when $\gcd(n,2)=1$.
\end{itemize}
Now, we provide new conditions for establishing monomial equivalence
of constacyclic codes over $\F_7$, which cannot be deduced from
Theorems \ref{Aydin1} and \ref{Bierbrauer}.

\begin{proposition}\label{propF_7}
Let $n$ be a positive integer such that $\gcd(n,7)=1$. Then the following families are monomially equivalent:
\begin{itemize}
\item[$(i)$]
 $2$-constacyclic and $3$-constacyclic codes of length $n$ over $\F_7$ when $\gcd(n,2)=1$;
\item[$(ii)$]
$3$-constacyclic and $6$-constacyclic codes of length $n$ over $\F_7$ when $\gcd(n,3)=1$.
\end{itemize}
\end{proposition}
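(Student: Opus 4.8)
The plan is to derive both equivalences from Theorem~\ref{mainTheorem} and Theorem~\ref{Aydin1} together with transitivity of monomial equivalence. Take $\xi=5$ as a generator of $\F_7^\ast$, for which $\xi^1=5$, $\xi^2=4$, $\xi^3=6$, $\xi^4=2$, $\xi^5=3$; in particular $\ord(2)=\ord(4)=3$, $\ord(3)=\ord(5)=6$ and $\ord(6)=2$. The crucial observation is that the shift constants $2$ and $6$ are powers of the \emph{primitive} root $5$ with a multiple-of-exponent relation, $2=\xi^4=(\xi^1)^4$ and $6=\xi^3=(\xi^1)^3$, which is exactly the form $i_1=i_2s$ required to invoke Theorem~\ref{mainTheorem}. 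So the strategy is: use Theorem~\ref{mainTheorem} to tie $2$ (resp.\ $6$) to $5$, and then swap $5$ for $3$ for free via Theorem~\ref{Aydin1}, since $3$ and $5$ have the same multiplicative order.

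For part~$(i)$, I would first apply Theorem~\ref{mainTheorem} with $\xi^{i_1}=2$ and $\xi^{i_2}=5$, taking $i_1=4$, $i_2=1$, $s=4$, so that $i_1=i_2s$. The hypothesis $\gcd(n,q)=1$ is the standing assumption $\gcd(n,7)=1$. Writing $m=\gcd(n,q-1)=\gcd(n,6)$, the extra hypothesis $\gcd(n,2)=1$ forces $m$ to be odd, hence $m\in\{1,3\}$ and therefore $m\mid 3=\gcd(3,6)=\gcd\!\bigl(i_2(s-1),\,q-1\bigr)$; thus Theorem~\ref{mainTheorem} applies and the families of $2$-constacyclic and $5$-constacyclic codes of length $n$ over $\F_7$ are monomially equivalent. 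Since $\ord(3)=\ord(5)=6$, Theorem~\ref{Aydin1} shows the families of $3$-constacyclic and $5$-constacyclic codes are monomially equivalent, and composing the two equivalences gives $(i)$.

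For part~$(ii)$ I would run the same argument with $6$ in place of $2$: apply Theorem~\ref{mainTheorem} with $\xi^{i_1}=6$ and $\xi^{i_2}=5$, taking $i_1=3$, $i_2=1$, $s=3$. Now $\gcd\!\bigl(i_2(s-1),\,q-1\bigr)=\gcd(2,6)=2$, while the hypothesis $\gcd(n,3)=1$ forces $m=\gcd(n,6)\in\{1,2\}$, so $m\mid 2$ and Theorem~\ref{mainTheorem} gives the monomial equivalence of the $6$-constacyclic and $5$-constacyclic families; combining this with Theorem~\ref{Aydin1} (as $\ord(3)=\ord(5)$) and transitivity yields the monomial equivalence of the $3$-constacyclic and $6$-constacyclic families.

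The only genuinely delicate point---and the reason one passes through $5$ rather than applying Theorem~\ref{mainTheorem} to the pairs $(2,3)$ and $(6,3)$ directly---is the bookkeeping of which element plays the role of $\xi^{i_1}$ and which of $\xi^{i_2}$. By the remark following Theorem~\ref{mainTheorem}, the relation $i_1=i_2s$ forces $\ord(\xi^{i_1})\mid\ord(\xi^{i_2})$, so one is obliged to set $\xi^{i_1}\in\{2,6\}$ and $\xi^{i_2}=3$; but with $\xi=5$ we have $2=\xi^4$, $6=\xi^3$, $3=\xi^5$, and neither $4=5s$ nor $3=5s$ has a non-negative integer solution. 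Routing through $5=\xi^1$ removes this obstruction, since with $i_2=1$ the equation $i_1=i_2s$ holds with $s=i_1$ for any target $i_1$; the equal-order Theorem~\ref{Aydin1} then replaces $5$ by $3$ at no cost. It remains only to verify, as done above, that in each case the divisibility condition $m\mid\gcd(i_2(s-1),q-1)$ of Theorem~\ref{mainTheorem} reduces exactly to the stated hypothesis---$\gcd(n,2)=1$ for $(i)$ and $\gcd(n,3)=1$ for $(ii)$---rather than to something stronger.
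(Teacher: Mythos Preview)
Your proof is correct and follows essentially the same approach as the paper: apply Theorem~\ref{mainTheorem} with the primitive element in the $\xi^{i_2}$ slot (so $i_2=1$) and then use Theorem~\ref{Aydin1} to pass between elements of the same order. The only cosmetic differences are that the paper takes $\xi=3$ (so $3=\xi^1$ and the same-order swap is $2\leftrightarrow 4$ for part~$(i)$, with Theorem~\ref{mainTheorem} applied directly to $3$) and splits into the cases $\gcd(n,6)\in\{1,2,3\}$, whereas you take $\xi=5$, route both parts through $5$ with the single swap $3\leftrightarrow 5$, and handle all values of $m$ at once---a slightly cleaner packaging of the same argument.
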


\begin{proof}
There are three cases to consider: when $\gcd(n,6)=1, 2$, or $3$.
If $\gcd(n,6)=1$, then both $(i)$ and $(ii)$ follow from Corollary
\ref{GeneralizedAydingcd1} as $\ord(2)=3 \mid \ord(3)=6$ and
$\ord(6)=2 \mid\ord(3)=6$. 

Let $\gcd(n,6)=2$. Note that $6=3^{i_1}$ and $3=3^{i_2}$ over $\F_7$,
where $i_1=3$ and $i_2=1$. Moreover, $2 \mid \gcd(i_2(3-1),6)=2$. So
all the conditions of Theorem \ref{mainTheorem} are satisfied. Hence
$3$-constacyclic and $6$-constacyclic codes of length $n$ over $\F_7$
are monomially equivalent.

Let $\gcd(n,6)=3$. We have $4=3^{i_1}$ and $3=3^{i_2}$ over $\F_7$,
where $i_1=4$ and $i_2=1$. Moreover, $3 \mid
\gcd(i_2(4-1),6)=3$. Therefore, all the conditions of Theorem
\ref{mainTheorem} are satisfied and $3$-constacyclic and
$4$-constacyclic codes of length $n$ over $\F_7$ are monomially
equivalent. Now the fact that families of $2$-constacyclic and
$4$-constacyclic codes over $\F_7$ are monomially equivalent completes
the proof.
\end{proof}

The equivalence graph of constacyclic codes over $\F_7$ is provided in
Figure \ref{F4}. Similar to previous cases, one can show through
examples that the given gcd conditions are sharp, i.e., relaxing them
may result in inequivalent constacyclic codes for certain lengths.

\begin{figure}[h!]
\begin{center}
\begin{tikzpicture} 
\Vertex[label=$1$, size=1, fontscale=1 , opacity =.2]{1}
\Vertex[x=4.5, label=$2\approx 4$, size=1, fontscale=1 , opacity =.2]{2}
\Vertex[y=-4.5, label=$6$, size=1 , fontscale=1 , opacity =.2]{3}
\Vertex[x=4.5 ,y=-4.5, label=$3\approx 5$, size=1 , fontscale=1, opacity =.2]{4}
\Edge[Math, color=black, label={\ \gcd(n,3)=1\ }](1)(2)
\Edge[Math, color=black, label={\hbox to 10pt{\hss$\gcd(n,2)=1$\hss}}, distance=.75](1)(3)
\Edge[Math, color=black, label={\hbox to 15pt{\hss$\gcd(n,6)=1$\hss}}](1)(4)
\Edge[Math, color=black, label={\hbox to 10pt{\hss$\gcd(n,2)=1$\hss}}, distance=0.25](2)(4) 
\Edge[Math, color=black, label={\ \gcd(n,3)=1\ }](3)(4) 
\end{tikzpicture}
\caption{Monomially equivalent $a$-constacyclic codes of length $n$ over $\F_7$ with $a\in \{1,2,\cdots, 6\}$.}
\label{F4}
\end{center}
\end{figure}

\section{New record-breaking codes}\label{S:examples}

Finding linear codes with good parameters is a challenging task in
algebraic coding theory. Many works in the literature have made
efforts to develop systematic approaches for computer-based searches
of good codes. However, the computational complexity associated with
computations like the code's minimum distance has significantly slowed
down the search process. Recently, advancements have been made by
employing the equivalence of linear cyclic and constacyclic codes to
design more efficient search algorithms
\cite{Aydin1,Aydin2,RezaEquivalence}. As a result, dozens of new
linear codes and binary quantum codes have been discovered.

In this section, we utilize the results on the equivalence of
constacyclic codes to cut down the size of the search space for new
linear codes with good parameters. To begin, we employ the results
presented in Theorems \ref{Aydin1} and \ref{Bierbrauer}, as well as
Sections \ref{S2} and \ref{S3}, to identify all families of monomially
equivalent constacyclic codes with distinct shift constants. Once this
is done, we focus on a specific shift constant $a$ and apply the
results discussed in \cite{akre2023,RezaEquivalence} to pinpoint
monomially equivalent $a$-constacyclic codes.  This helps to search
more systematically for new linear codes with good parameters.
In certain scenarios, calculating the actual minimum distance can be a
time-consuming process, sometimes taking several days or even
years. However, by identifying equivalent codes and narrowing down the
search space, the computation can be restricted to a smaller set of
codes.
 
In particular, by pruning the search space using these results we were
able to identify more than $70$ new record-breaking linear codes over
various finite fields and binary quantum codes. In our search we
considered the following methods:
 \begin{itemize}
 \item direct construction of constacyclic codes, 
 \item applying Construction X to two constacyclic (cyclic) codes with an auxiliary best-known linear code,
 \item applying Construction XX to three constacyclic (cyclic) codes with two auxiliary best-known linear codes.
 \end{itemize}
 
In our computations we represent the currently best-known linear code
of length $n$ and dimension $k$ by $\BKLC(n,k)$. The parameters of
such codes are available online \cite{Grassl}.

Before presenting our new codes, we first recall Construction X (see
\cite[Chapter 18, \S7]{MS}). Let
$C_1$ and $C_2$ be linear codes over $\F_q$ with parameters $[n, k_1,
  d_1]_q$ and $[n, k_2, d_2]_q$ such that $C_2 \subseteq C_1$. Then,
we can express $C_1$ as union of $q^{k_1-k_2}$ different cosets of
$C_2$, namely $D_i$ for $1\le i \le q^{k_1-k_2}$, with $D_1=C_2$.

Let $C_3$ be another linear code over $\F_q$ with parameters
$[n_3,k_3,d_3]_q$ such that $k_3=k_1-k_2$. We denote each element of
$C_3$ by $z_i$ for $1\le i \le q^{k_3}$, with $z_1=0$.  {\em Construction X} of
$(C_1,C_2,C_3)$ is defined by
\begin{equation}\label{CX}
E=\{(x_i,z_i)\in \F_q^{n+n_3}\colon x_i \in D_i, 1\le i\le q^{k_1-k_2} \}.
\end{equation}
The labeling of the cosets $D_i$ and the vectors $z_i\in C_3$ can
chosen such that the new code $E$ is also a linear code over $\F_q$
and has parameters $[n + n_3, k_1,d \geq \min (d_2, d_1 + d_3) ]_q$.
To apply Construction X to linear codes $(C_1,C_2,C_3)$, we use the
built-in function in Magma \cite{magma} using the command
\texttt{ConstructionX(C1, C2, C3)}.  Our numerical results result in
several improvements of the parameters of codes in the online tables
\cite{Grassl}.

\begin{example}\label{EN:1}
  Let $n=145$. As Figure \ref{F1} shows, the fact that $\gcd(145,2)=1$
  implies that cyclic and $2$-constacyclic codes of length $145$ are
  monomially equivalent over $\F_3$. So we can restrict the
  search to cyclic codes of length $145$.  Let $C$ be the cyclic
  code of length $145$ with the defining set $Z(0) \cup Z(1)$ over
  $\F_3$.  Our computations using Magma show that $C$ has parameters
  $[145,116,10]_3$ which is a new {\em record-breaking} linear code
  over $\F_3$. The previously best-known code with the same length and
  dimension had minimum distance $9$.  Shortening of this code gives
  four other record-breaking codes over $\F_3$ with parameters
  $[145-i,116-i,10]_3$ for each $1\le i \le 4$.
\end{example}

\begin{example}\label{EN:2}
  Let $n=39$ and $C_1$ and $C_2$ be the length $39$
  $\omega$-constacyclic code over $\F_4$ with the defining sets
  $Z(10)\cup Z(19)$ and $Z(10)\cup Z(13)\cup Z(19)$, respectively. Our
  computations show that $C_1$ and $C_2$ have parameters $[39,27,7]_4$
  and $[39,24,9]_4$, respectively and $C_2\subseteq C_1$.  Let
  $C_3=[3,3,1]_4$.  Now applying Construction X to $(C_1,C_2,C_3)$
  gives a linear code $D$ with parameters $[42,27,d\ge 8]_4$ over
  $\F_4$. Computing the true minimum distance of $D$ showed that
  $d=9$. Therefore $D$ is a {\em record-breaking} linear code as the
  previously best-known linear code with the same length and dimension
  over $\F_4$ had minimum distance $8$. Shortening of this code gives
  two other record-breaking codes over $\F_4$ with the parameters
  $[41,26,9]_4$ and $[40,25,9]_4$.
\end{example}

\begin{example}\label{EN:3}
 Let $n=109$. Since $\gcd(109,3)=1$, as Figure \ref{F2} indicates, the
 families of cyclic and constacyclic codes over $\F_4$ are monomially
 equivalent.  So we can restrict our search to cyclic codes of length
 $109$.  Let $C$ be the cyclic code of length $109$ with defining set
 $Z(1)\cup Z(3)$ over $\F_4$.  Our computation using Magma reveals
 that the linear code $C$ has parameters of $[109,73,16]_4$,
 establishing that it is a new {\em record-breaking} linear code over
 $\F_4$. Computing the minimum distance took a bit more than $5$ CPU
 years and was completed in about using $80$ hours real time using up
 to $600$ cores.

 Using standard procedures like shortening, puncturing, lengthening,
 and subcodes, we find $56$ additional codes with improved parameters.
 Moreover, we can derive more than $20$ other codes in \cite{Grassl}
 for which the lower bound on the minimum distance was derived by
 non-constructive methods such as the Gilbert–Varshamov bound.

 The code $C$ contains its Hermitian dual. Applying the quantum
 stabilizer construction of \cite[Theorem 2]{Calderbank} to it implies
 the existence of a $[\![109,37,16]\!]_2$ binary quantum code. This is
 a {\em record-breaking quantum code} as the previous code with the
 same length and dimension had minimum distance $14$ in the database of
 currently best-known binary quantum codes \cite{Grassl}.
\end{example}

\begin{example} \label{EN:4}
  (1) Let $n = 87$. The equivalence graph in Figure \ref{F3} and the
  fact that $\gcd(87, 2) = 1$ imply that cyclic and $a$-constacyclic
  codes over $\F_5$ of length $87$ are monomially equivalent for $a
  \in {2, 3, 4}$. Let $C_1$, $C_2$, and $C_3$ be linear cyclic codes
  of length $87$ over $\F_5$ with defining sets $Z(0)\cup Z(2)\cup
  Z(3) \cup Z(4)$, $Z(2)\cup Z(3) \cup Z(4) \cup Z(29)$, and $Z(0)\cup
  Z(2)\cup Z(3) \cup Z(4) \cup Z(29)$, respectively.  Our computations
  using Magma yield that these codes have parameters $[87,44,22]_5$,
  $[87,43,23]_5$, and $[87,42,24]_5$, respectively, all with a {\em
    larger minimum distance} than the currently best-known codes of
  the same length and dimension.  The calculation of the minimum
  distance of the code $[87,42,24]_5$ took a bit more than $1000$ CPU
  days.

  Shortening these codes results in record-breaking codes with parameters 
  \begin{itemize}
  \item $[87 - i_1, 44 - i_1, 22]_5$ for $1 \leq i_1 \leq 2$, 
  \item $[87 - i_2, 43 - i_2, 23]_5$ for $1 \leq i_2 \leq 7$, and
  \item $[87 - i_3, 42 - i_3, 24]_5$ for $1 \leq i_3 \leq 6$.
  \end{itemize}
  Moreover, applying Construction X to $(C_2, C_3, \text{BKLC}(1, 1))$
  and $(C_1, C_3, \text{BKLC}(3, 2))$ yields codes with parameters
  $[88,43,24]_5$ and $[90,44,24]_5$, respectively, surpassing the
  currently best-known codes with the same length and dimension over
  $\F_5$. 

  (2) Let $n=101$. As Figure \ref{F3} shows, the fact that
  $\gcd(101,2)=1$ implies that cyclic and $a$-constacyclic codes over
  $\F_5$ of length $101$ are monomially equivalent for $a \in
  \{2,3,4\}$. So it is sufficient to consider cyclic codes over $\F_5$
  of length $101$.  Let $C$ be the cyclic code of length $101$ with
  the defining set $Z(0)\cup Z(1)$ over $\F_5$.  Using Magma, it took
  about $44.5$ CPU days to show that the linear code $C$ has
  parameters $[101,75,13]_5$, which is a new {\em record-breaking}
  linear code over $\F_5$.  This code surpasses the previously
  best-known linear code of the same length and dimension which had
  minimum distance $12$. Shortening of this code produces $22$ other
  record-breaking linear codes over $\F_5$ with the parameters
  $[101-i,75-i,13]_5$ for each $1\le i \le 22$.
\end{example}

Before presenting our last examples, we briefly recall Construction XX
\cite{XX} which is a generalization of Construction X.

Let $C$ be an $[n, k, d]_q$ linear code over $\F_q$ containing two
subcodes $C_1$ and $C_2$ with parameters $[n,k - k_1,d_1]_q$ and
$[n,k- k_2,d_2]_q$, respectively. Let $D_1$ and $D_2$ be two linear
codes with parameters $[n_1,k_1,\delta_1]_q$ and
$[n_2,k_2,\delta_2]_q$, respectively.  Then appending vectors from
$D_1$ and $D_2$ to the end of the cosets of $C$ with respect to $C_1$
and $C_2$ (which form a subspace of the direct product $C\times D_1
\times D_2$) implies the existence of an
\[
[n+n_1+n_2, k, \min(\delta_0, d_1 + \delta_1, d_2 + \delta_2, d + \delta_1 + \delta_2)]_q
\]
linear code over $\F_q$, where $\delta_0=d(C_1 \cap C_2)$. We take
advantage of the built-in function \texttt{ConstructionXX(C1, C2, C3,
  D1, D2)} in Magma \cite{magma} to form Construction XX of linear
codes $(C_1,C_2,C_3)$ using the auxiliary codes $D_1$ and $D_2$.

In the following example, we construct four {record-breaking} linear
codes over $\F_7$ using Construction X and XX.

\begin{example}\label{EN:5}
(1) Let $n=57$ and $C_1$ and $C_2$ be the cyclic codes of length $57$
  over $\F_7$ with the defining sets $Z(1)\cup Z(5)\cup Z(12)\cup
  Z(22)$ and $Z(0) \cup Z(1)\cup Z(5)\cup Z(12)\cup Z(22)$,
  respectively.  Then $C_2 \subseteq C_1$. Our computations using Magma
  show that $C_1$ and $C_2$ have parameters $[57,45,7]_7$ and
  $[57,44,8]_7$, respectively. Applying Construction X to
  $(C_1,C_2,\BKLC(1,1))$ gives a {\em record-breaking} linear codes
  over $\F_7$ with the parameters $[58,45,8]_7$.

(2) Let $n=58$ and $C_1$, $C_2$, $C_3$, and $C_4$ be the linear cyclic
  codes of length $58$ over $\F_7$ with the defining sets $Z(1)\cup
  Z(2)$, $Z(0)\cup Z(1)\cup Z(2)$, $Z(1)\cup Z(2)\cup Z(29)$, and
  $Z(0) \cup Z(1)\cup Z(2)\cup Z(29)$, respectively.  The following
  inclusions holds between the mentioned codes $C_4 \subset C_2
  \subset C_1$ and $C_4 \subset C_3 \subset C_1$. Moreover, $C_4=C_2
  \cap C_3$. Our computations using Magma show that $C_1$, $C_2$, $C_3$,
  and $C_4$ have parameters $[58,44,8]_7$, $[58,43,9]_7$,
  $[58,43,9]_7$, and $[58,42,10]_7$, respectively.
  Applying Construction XX to $(C_1,C_2,C_3)$ with the auxiliary codes
  $D_1=D_2=[1,1,1]_7$ implies the existence of a code $[60,44,10]_7$,
  which is a new {\em record-breaking} linear code over $\F_7$.

(3) Let $n=74$ and and $C_1$, $C_2$, $C_3$, and $C_4$ be the linear
  cyclic codes of length $74$ over $\F_7$ with the defining sets
  $\mathcal{Z}_0=Z(4)\cup Z(5) \cup Z(10)\cup Z(15)$ ,
  $\mathcal{Z}_0\cup Z(37)$, $\mathcal{Z}_0\cup Z(0)$, and
  $\mathcal{Z}_0\cup (0) \cup Z(37)$, respectively. We have
  $C_4\subset C_2\subset C_1$, $C_4\subset C_3\subset C_1$, and
  $C_4=C_2\cap C_3$. Our computations using Magma show that $C_1$,
  $C_2$, $C_3$, and $C_4$ have parameters $[74,38,20]_7$,
  $[74,37,21]_7$, $[74,37,21]_7$, and $[74,36,22]_7$, respectively.
  All these codes improve the lower bound on the minimum distance.
  Applying Construction XX to $(C_1,C_2,C_3)$ with the auxiliary codes
  $D_1=D_2=[1,1,1]_7$ implies the existence of a code $[76,38,22]_7$,
  which is a new {\em record-breaking} linear code over $\F_7$. We do
  not explictly list the additional codes that can be trivially
  derived from these new codes.
\end{example}

Table \ref{T1} lists the parameters of our new linear codes that were
discussed in this section.
\medskip

\begin{table}[h!]
  \centering
      \begin{tabular}{|p{0.31cm}|p{7.6cm}|p{2.55cm}|}
      \hline
{\centering$q$}    & \centering{Parameters} & {\centering{Reference}} \\
          \hline
$3$ &$[145-i,116-i,10]_3 \hfill 0\le i \le 4$ & Example \ref{EN:1}\\
\hline
$4$ & $[42-i,27-i,9]_4 \hfill  0 \le i \le 2$ & Example \ref{EN:2}\\
$4$ & $[109-i,70-i,16]_4 \hfill  0\le i \le 3$ & Example \ref{EN:3}\\
$4$ & $[109-i,71-i,16]_4 \hfill 0\le i\le 12$ & Example \ref{EN:3}\\
$4$ & $[109-i,72-i,16]_4 \hfill0\le i \le 12$ & Example \ref{EN:3}\\
$4$ & $[109-i,73-i,16]_4 \hfill  0\le i \le 11$ & Example \ref{EN:3}\\
$4$ & $[109+i,73,16]_4 \hfill  1\le i \le 3$ & Example \ref{EN:3}\\
$4$ & $[110,71+i,16]_4 \hfill 0\le i\le 1$ & Example \ref{EN:3}\\
$4$ & $[111-i,57-i,25]_4  \hfill  0\le i \le 1 $& Example \ref{EN:6}\\
$4$ & $[111,72,16]_4$ & Example \ref{EN:3}\\
$4$ & $[114,57,26]_4$ & Example \ref{EN:6}\\
\hline
$5$ & $[87-i,44-i ,22]_5 \hfill 0\le i \le 2$ & Example \ref{EN:4}\\
$5$ & $[87-i, 43-i,23]_5 \hfill 0\le i \le 7$ & Example \ref{EN:4}\\
$5$ & $[87-i,42-i ,24]_5 \hfill 0\le i \le 6$ & Example \ref{EN:4}\\
$5$ & $[88,43,24]_5$ & Example \ref{EN:4}\\
$5$ & $[90,44,24]_5 $ & Example \ref{EN:4}\\
$5$ & $[101-i,75-i,13]_5 \hfill 0\le i \le 22$ & Example \ref{EN:4}\\
\hline
$7$ & $[58,45,8]_7$ & Example \ref{EN:5}\\
$7$ & $[59,44,9]_7$ & Example \ref{EN:5}\\
$7$ & $[60,44,10]_7$ & Example \ref{EN:5}\\
$7$ & $[74,38,20]_7$ & Example \ref{EN:5}\\          
$7$ & $[74,37,21]_7$ & Example \ref{EN:5}\\          
$7$ & $[74,36,22]_7$ & Example \ref{EN:5}\\          
$7$ & $[76,38,22]_7$ & Example \ref{EN:5}\\          
                        \hline
                  \end{tabular}%
        \caption{New record-breaking linear codes over $\F_q$}
              \label{T1}%
  \end{table}%

The next example discusses constacyclic codes over $\F_4$ that can be
used to construct new record-breaking binary quantum codes.  
\begin{example}\label{EN:6}
(1) Let $C$ be an $\omega$-constacyclic code of length $111$ over $\F_4$
  with defining set $Z(1) \cup Z(7) \cup Z(19)$. Our computations using
  Magma \cite{magma} show that $C$ is a $[111,57,25]_4$ Hermitian
  dual-containing linear code.  Computing the minimum distance took
  almost $70$ CPU years, and it was completed in about $47$ days using
  up to $600$ cores.  This code has a {\em better minimum distance}
  than the previously best-known linear code over $\F_4$, which has
  minimum distance $23$. Shortening the code $C$ gives another new
  code with parameters $[110,56,25]_4$.  Applying the construction of
  \cite[Theorem 2]{Calderbank} gives a {\em record-breaking} binary
  quantum code with parameters $[\![111,3,25]\!]_2$. Moreover,
  applying the construction of \cite[Corollary 3]{DastbastehLisonek}
  to the Hermitian dual of $C$ gives another {\em record-breaking}
  binary quantum code with parameters $[\![114,0,26]\!]_2$. The latter
  code is in correspondence to a Hermitian self-dual code over $\F_4$
  with parameters $[114,57,26]_4$, which is indeed another {\em
    record-breaking} linear code over $\F_4$.

  The $\omega$-constacyclic code of length $111$ over $\F_4$ with
  defining set $Z(1)\cup Z(7)\cup Z(19) \cup Z(37)$ is the Hermitian
  dual of the code $C$. It has parameters $[111,54,d]_q$ with
  $d\in\{26,28\}$.  Verifying that the minimum distance is indeed $28$
  is estimated to take about $33$ CPU years; we have not (yet) carried
  out this calculation, but the code $[111,54,28]_4$ would also
  improve the currently best parameters.
  
(2) Let $C$ be an $\omega$-constacyclic code of length $183$ over
  $\F_4$ with the defining set $Z(25)$. The code $C$ is Hermitian
  dual-containing linear code with parameters $[183,153,d\ge
    11]_4$. Establishing the lower bound $d\ge 11$ took about $5.4$
  CPU years and was completed in about $3.5$ days using up to $600$
  cores.  Verifying that the true minimum distance is $12$ is
  estimated to take about $200$ CPU years.  The construction of
  \cite[Theorem 2]{Calderbank} gives a {\em record-breaking} binary
  quantum code with parameters $[\![183,123,d\ge 11]\!]_2$.
\end{example}

Table \ref{T2} summarizes the parameters of our new binary quantum codes. 

\begin{table}[h!]
  \centering
      \begin{tabular}{|p{4.3cm}|p{2.6cm}|}
      \hline
 \centering{Parameters} & {\centering{Reference}} \\
          \hline
$[\![109,37,16]\!]_2 $ & Example \ref{EN:3}\\
$[\![111,3,25]\!]_2$ & Example \ref{EN:6}\\
$[\![114,0,26]\!]_2$ & Example \ref{EN:6}\\
$[\![183,123,d \ge 11]\!]_2 $ & Example \ref{EN:6}\\
                        \hline
                  \end{tabular}%
        \caption{New record-breaking binary quantum codes}
              \label{T2}%
  \end{table}%

It should be mentioned that applying the secondary constructions of
binary quantum codes discussed in \cite[Theorem 6]{Calderbank} to any
of the codes in Table \ref{T2} produces dozens of additional
record-breaking binary quantum codes and sometimes provides a method
for constructing many other binary quantum codes, the existence of
which was established by the Gilbert–Varshamov bound. In particular,
applying such secondary constructions to the code
$[\![109,37,16]\!]_2$ produces $90$ more record-breaking binary quantum
codes.

 \section*{Acknowledgement}
 The first and third author acknowledge support by the Spanish Ministry of Economy and Competitiveness through the MADDIE project (Grant No. PID2022-137099NB-C44), by the Spanish Ministry of Science and Innovation through the proyect “Few-qubit quantum hardware, algorithms and codes, on photonic and solid-state systems” (PLEC2021-008251), by the Ministry of Economic Affairs and Digital Transformation of the Spanish Government through the QUANTUM ENIA project call - Quantum Spain project, and by the European Union through the Recovery, Transformation and Resilience Plan - NextGenerationEU within the framework of the “Digital Spain 2026 Agenda”.

M.\,G. acknowledges support by the Foundation for Polish
Science (IRAP project, ICTQT, contracts no. 2018/MAB/5 and
2018/MAB/5/AS-1, co-financed by EU within the Smart Growth Operational
Programme). 

 \section*{Declaration of competing interest}
 The authors declare that there is no conflict of interest related to the research presented in the manuscript submitted to
Designs, Codes and Cryptography.

 \section*{Data Availibility}
Data sharing is not applicable to this article since no datasets were
generated or analyzed during the current study. Moreover, all
computational examples in this work are thoroughly explained through
relevant examples.

\bibliographystyle{abbrv}
\addcontentsline{toc}{chapter}{Bibliography}
\bibliography{EquivalenceReferences}

\end{document}